\spnewtheorem{myclaim}{Claim}{\itshape}{\rmfamily}
\newcommand{\conv}{\mathrm{conv}}
\newcommand{\ie}{i.\,e.}
\newcommand{\etal}{et al.\ }
\DeclareMathOperator{\intSol}{Sol.\!int}
\DeclareMathOperator{\fracSol}{Sol.\!frac}
\DeclareMathOperator{\prox}{prox}
\DeclareMathOperator{\sens}{sens}
\DeclareMathOperator{\dist}{dist}
\DeclareMathOperator{\subDet}{subDet}
\newcommand{\svdots}{\raisebox{3pt}{$\scalebox{.6}{$\vdots$}$}}
\newcommand{\sdots}{\raisebox{3pt}{$\scalebox{.6}{$\dots$}$}}
\newcommand{\sddots}{\raisebox{3pt}{$\scalebox{.6}{$\ddots$}$}}
\begin{document}

\title{Tightness of Sensitivity and Proximity Bounds for Integer Linear Programs\thanks{This work was supported by DFG project JA 612/20-1}}

\authorrunning{S. Berndt et al.}

\author{Sebastian Berndt\inst{1}\orcidID{0000-0003-4177-8081} \and
Klaus Jansen\inst{2} \and
Alexandra Lassota \inst{3}\orcidID{0000-0001-6215-066X}}

\institute{Institute of IT Security, University of Lübeck,  Lübeck, Germany
\email{s.berndt@uni-luebeck.de}\\ \and
Department of Computer Science, Kiel University,  Kiel, Germany\\
\email{kj@informatik.uni-kiel.de}\\ \and
Department of Computer Science, Kiel University,  Kiel, Germany\\
\email{ala@informatik.uni-kiel.de}}

\maketitle

\begin{abstract}
  We consider Integer Linear Programs (ILPs), where each variable corresponds to an
  integral point within a polytope $\mathcal{P}\subseteq \mathbb{R}^{d}$, i.\,e., ILPs of the form
  $\min\{c^{\top}x\mid \sum_{p\in\mathcal P\cap \mathbb Z^d} x_p p = b,
  x\in\mathbb Z^{|\mathcal P\cap \mathbb Z^d|}_{\ge 0}\}$.
  The distance between an optimal fractional solution and an optimal
  integral solution (called the \emph{proximity}) is an important measure. A classical result by Cook et
  al.~(Math. Program., 1986) shows that it is at most $\Delta^{\Theta(d)}$ where
  $\Delta=\lVert \mathcal{P}\cap \mathbb{Z}^{d} \rVert_{\infty}$ is the largest
  coefficient in the constraint matrix.
  Another important measure studies the change in an optimal solution if the
  right-hand side $b$ is replaced by another right-hand side $b'$. The 
  distance between an optimal solution $x$ w.r.t.~$b$ and an optimal solution
  $x'$ w.r.t.~$b'$ (called the \emph{sensitivity}) is similarly bounded, i.\,e., $\lVert
  b-b' \rVert_{1}\cdot \Delta^{\Theta(d)}$, also shown by Cook \etal~(Math. Program., 1986). 

  Even after more than thirty years, these bounds are essentially the best known
  bounds for these measures.
  While some lower bounds are known for these measures, they either only work
  for very small values of $\Delta$, require negative entries in the
  constraint matrix, or have fractional right-hand sides. 
  Hence, these lower bounds often do not correspond to instances from
  algorithmic problems. 
 This work presents for each $\Delta > 0$ and each $d > 0$ ILPs of the
  above type with non-negative constraint matrices such that their proximity and
  sensitivity is at least $\Delta^{\Theta(d)}$.
  Furthermore, these instances are closely related to instances of the Bin
  Packing problem as they form a subset of columns of the \emph{configuration ILP}.
  We thereby show that the results of Cook et al.~are indeed tight, even for
  instances arising naturally from problems in combinatorial optimization.

\keywords{Sensitivity \and Proximity \and Lower Bounds}
\end{abstract}
  
\section{Introduction}
Integer (Linear) Programs are of great interest throughout computer science, both in
theory and in practice.
Many natural parameters were studied to describe the properties of such
programs.
Let $d \in \mathbb{N}_{>0}$.
For a point $x\in \mathbb{R}^{d}$ and a set $Y\subseteq \mathbb{R}^{d}$, we
define $\dist(x,Y)$ as the minimal $\ell_{\infty}$-distance of $x$ to any point
in $Y$, i.\,e., $\dist(x,Y)=\min_{y\in Y}\{\lVert x-y \rVert_{\infty}\}$.
Furthermore, for two sets $X,Y\in \mathbb{R}^{d}$, we define
$\dist(X,Y)=\max_{x\in X}\{\dist(x,Y)\}$ as the maximum over all minimal distances between any point $x\in X$ to the
set $Y$. 
This work focuses on two such measures called \emph{sensitivity} and
\emph{proximity} that frequently arise in the design of approximation and online algorithms (see
e.\,g.~\cite{DBLP:journals/mp/EpsteinL09,DBLP:journals/siamjo/EpsteinL13,DBLP:journals/orl/Hochbaum04,DBLP:journals/siamdm/JansenK19,DBLP:conf/waoa/JansenR11,DBLP:journals/mor/SandersSS09,DBLP:journals/mor/SkutellaV16,DBLP:journals/mlq/Subramani04}). 
For a given constraint matrix $A\in \mathbb{Z}^{d\times n}$, a right-hand side
$b\in \mathbb{Z}^{d}$, and an objective
function $c\in \mathbb{Z}^{n}$, let $\intSol(A,b,c) = \{x \mid Ax = b, \min\{c^{\top}x\}, x\in
\mathbb{Z}^{n}_{\geq 0}\}$ be the set of optimal integral solutions. Further, denote by $\fracSol(A,b,c) = \{z | Az = b, \min\{c^{\top}z\}, z\in
\mathbb{Q}^{n}_{\geq 0}\}$ the set of optimal fractional solutions, i.e., the integrality constraint $x \in \mathbb{Z}^{n}_{\geq 0}$ is relaxed to $z \in \mathbb{Q}^{n}_{\geq 0}$.
Throughout this work, we always assume that an ILP has $n$ variables, $d$
constraints, and is of full rank. Thus $n\geq d$ holds.

The \emph{sensitivity} of the ILP measures the distance
between two optimal integral solutions if the right-hand side changes.
Formally, we define $\sens(A,b,b',c)$ as $\dist(\intSol(A,b,c), \intSol(A,b',c))$. 
A small sensitivity is useful when the right-hand side changes in a problem
formulation as this implies that an optimal solution for the new problem is
close. Thus, we do not have to change our current
optimal integral solution $x$ too much. Hence, we can just search for it
exhaustively or by a dynamic program. Typical applications are online algorithms where new items arrive or
leave (thus changing the right-hand side corresponding to the present items).

The \emph{proximity} of the ILP denoted by $\prox(A,b,c)$ is formally
defined as the term 
$\dist(\fracSol(A,b,c), \intSol(A,b,c))$, i.\,e., the maximal distance between any optimal fractional
solution and an optimal integral one. 
If the proximity is small, \ie, there exists an optimal integer solution near to any optimal
fractional solution, this allows us to solve the Integer Linear Program fast: First, we
compute the optimal fractional solution $z$, then we search for an optimal
integral solution $x$ in the small box implied by the proximity bound around~$z$.

Cook \etal presented in \cite{DBLP:journals/mp/CookGST86} upper bounds for these
values. In the following, $\Delta$ will always denote the largest absolute value of
the entries in $A$, i.\,e., $\Delta = \lVert A \rVert_{\infty}$ and
$\subDet(A)$ will be the largest determinant of any $d\times d$ submatrix of~$A$. Note that this value bounds the determinant of any submatrix of any dimension as $A$ is of full rank. 
\begin{proposition}[Theorem 1 in~\cite{DBLP:journals/mp/CookGST86}]\label{p:CookProximity}
  If $\intSol(A,b,c)$ is non-empty, then for each $x\in \intSol(A,b,c)$ we have
  $\dist(x,\fracSol(A,b,c))\leq n \cdot \subDet(A)$ and furthermore, for each $y\in
  \fracSol(A,b,c)$ we have $\dist(y,\intSol(A,b,c))\leq n\cdot \subDet(A)$. 
\end{proposition}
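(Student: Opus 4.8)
The plan is to fix an arbitrary optimal integral solution $x\in\intSol(A,b,c)$ and an arbitrary optimal fractional solution $z\in\fracSol(A,b,c)$, and to transfer mass between them along integral circuits of $A$ so as to produce a \emph{fractional} optimum near $x$ and an \emph{integral} optimum near $z$ \emph{simultaneously}; both inequalities then fall out of the same construction. Since $Ax=Az=b$, the difference $g:=z-x$ lies in the kernel of $A$. I would consider the polyhedral cone $K=\{y\in\mathbb{R}^n : Ay=0,\ y_j\ge 0 \text{ whenever } g_j\ge 0,\ y_j\le 0 \text{ whenever } g_j<0\}$, which contains $g$ and fixes the coordinatewise sign pattern of $g$. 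A short check shows $K$ is pointed, so its extreme rays generate it.

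Next I would invoke the conic version of Carathéodory's theorem to write $g=\sum_{i=1}^{t}\lambda_i r_i$ with $\lambda_i>0$, $t\le n$, and each $r_i$ an extreme ray of $K$. The key technical step, and the place where $\subDet(A)$ enters, is that every extreme ray can be chosen integral with $\lVert r_i\rVert_\infty\le\subDet(A)$. An extreme ray is a one-dimensional face, hence pinned down by $n-1$ linearly independent active constraints; discarding the coordinates forced to zero leaves a support $S$ with $|S|\le d+1$ on which $r_i$ spans the kernel of the submatrix $A_S$. By Cramer's rule I can take the entries of $r_i$ on $S$ to be the $d\times d$ subdeterminants obtained by deleting one column of $A_S$, which are integral and bounded in absolute value by $\subDet(A)$. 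This combinatorial bookkeeping of simultaneously active constraints, together with the determinant bound, is the main obstacle.

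With such a decomposition in hand, I would split each coefficient into its integral and fractional parts and set $\tilde x:=x+\sum_i\lfloor\lambda_i\rfloor r_i$ and $\tilde z:=z-\sum_i\lfloor\lambda_i\rfloor r_i=x+\sum_i\{\lambda_i\}r_i$. As each $r_i$ is integral and lies in $\ker A$, the point $\tilde x$ is integral and $A\tilde x=A\tilde z=b$; the sign constraints defining $K$ guarantee $\tilde x,\tilde z\ge 0$, which one verifies coordinatewise against $x\ge 0$ and $z\ge 0$ respectively. The crucial observation is the identity $\tilde x+\tilde z=x+z$, so that $c^\top\tilde x+c^\top\tilde z=c^\top x+c^\top z$. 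Optimality of $x$ over integral feasible points gives $c^\top\tilde x\ge c^\top x$, while optimality of $z$ over fractional feasible points gives $c^\top\tilde z\ge c^\top z$; summing forces both to be equalities, so $\tilde x\in\intSol(A,b,c)$ and $\tilde z\in\fracSol(A,b,c)$.

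Finally I would read off the two bounds from the remainders. Since $\tilde z-x=\sum_i\{\lambda_i\}r_i$ with $0\le\{\lambda_i\}<1$, we get $\lVert x-\tilde z\rVert_\infty\le\sum_{i=1}^{t}\{\lambda_i\}\lVert r_i\rVert_\infty< t\cdot\subDet(A)\le n\cdot\subDet(A)$, and as $\tilde z$ is an optimal fractional solution this yields $\dist(x,\fracSol(A,b,c))\le n\cdot\subDet(A)$. Symmetrically $\tilde x-z=-\sum_i\{\lambda_i\}r_i$ gives $\lVert z-\tilde x\rVert_\infty< n\cdot\subDet(A)$ with $\tilde x$ optimal integral, hence $\dist(z,\intSol(A,b,c))\le n\cdot\subDet(A)$. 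Since $x$ and $z$ were arbitrary (and the assumption $\intSol(A,b,c)\neq\emptyset$ supplies the integral starting point needed for the second bound), both claims follow.
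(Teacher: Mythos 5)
Your proof is correct and follows essentially the same route as the source the paper cites for this proposition (Cook, Gerards, Schrijver and Tardos, Theorem~1), transcribed to equality-form ILPs: decompose $z-x$ into sign-compatible integral extreme rays of the kernel cone, bound them via Cramer's rule by $\subDet(A)$, round the Carath\'eodory coefficients, and use the identity $\tilde x+\tilde z=x+z$ to certify that both perturbed points remain optimal. The paper itself only states the proposition with a citation and gives no proof of its own, so there is nothing further to compare against.
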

Note that this implies that $\prox(A,b,c)\leq n\cdot \subDet(A)$. 

\begin{proposition}[Theorem 5 in~\cite{DBLP:journals/mp/CookGST86}]\label{p:CookSensitivity}
  If both $\intSol(A,b,c)$ and $\intSol(A,b',c)$ are non-empty, we have $\dist(x,\intSol(A,b',c))\leq (\lVert b-b'
  \rVert_{\infty}+2)\cdot n\cdot \subDet(A)$  for each
  $x\in \intSol(A,b,c)$.
\end{proposition}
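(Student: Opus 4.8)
The plan is to connect the two integral optima through their fractional counterparts, paying one application of the proximity bound (Proposition~\ref{p:CookProximity}) to cross the integrality gap on each side and a fractional sensitivity estimate to cross between the two right-hand sides. Fix $x\in\intSol(A,b,c)$. By Proposition~\ref{p:CookProximity} there is some $z\in\fracSol(A,b,c)$ with $\lVert x-z\rVert_\infty\le n\cdot\subDet(A)$. Suppose I can produce $z'\in\fracSol(A,b',c)$ with $\lVert z-z'\rVert_\infty\le n\cdot\subDet(A)\cdot\lVert b-b'\rVert_\infty$; applying Proposition~\ref{p:CookProximity} once more, now in its fractional-to-integral direction, yields $x'\in\intSol(A,b',c)$ with $\lVert z'-x'\rVert_\infty\le n\cdot\subDet(A)$. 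The triangle inequality, together with these two proximity bounds, then gives
\[
  \dist(x,\intSol(A,b',c))\le\lVert x-x'\rVert_\infty\le 2n\cdot\subDet(A)+\lVert z-z'\rVert_\infty\le(\lVert b-b'\rVert_\infty+2)\cdot n\cdot\subDet(A),
\]
so the two proximity applications are exactly what contribute the additive ``$+2$'', and the whole task reduces to the fractional sensitivity estimate.

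For the fractional estimate I would argue first for vertices and then lift to the whole optimal face. Move the right-hand side along $b_t=(1-t)b+tb'$, $t\in[0,1]$, and track an optimal vertex $z(t)$. On each maximal subinterval where one optimal basis $B$ (a nonsingular $d\times d$ submatrix of $A$) persists, the basic part of $z(t)$ equals $B^{-1}b_t$, so $z(t)$ is affine with slope $B^{-1}(b'-b)$; by Cramer's rule its $i$-th coordinate equals $\det(B_i)/\det(B)$, where $B_i$ arises from $B$ by replacing its $i$-th column with $b'-b$. Expanding along that column and using $\lvert\det B\rvert\ge 1$ together with the fact noted in the excerpt that $\subDet(A)$ bounds every subdeterminant of $A$, each slope coordinate is at most $\subDet(A)\cdot\lVert b'-b\rVert_1$ in absolute value. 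Since the lengths $t_{i+1}-t_i$ of the linear pieces sum to $1$, the total per-coordinate displacement is a weighted average of these slopes and is therefore bounded by the same quantity, giving $\lVert z(0)-z(1)\rVert_\infty\le\subDet(A)\cdot\lVert b'-b\rVert_1\le n\cdot\subDet(A)\cdot\lVert b'-b\rVert_\infty$, using $\lVert b'-b\rVert_1\le d\lVert b'-b\rVert_\infty\le n\lVert b'-b\rVert_\infty$. To reach the specific $z$ delivered by proximity (which need not be a vertex), I would write it via Minkowski--Weyl as a convex combination of vertices of $\fracSol(A,b,c)$ plus recession directions; applying the vertex bound to each vertex and recombining with the same coefficients produces a $z'\in\fracSol(A,b',c)$ at distance at most the bound, since $A$ and $c$ — and hence the recession cone of the optimal face — are identical for $b$ and $b'$.

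The main obstacle is making the parametric step rigorous. The optimal basis changes only finitely often, but degeneracy and ties mean the optimal solution and its supporting basis need not be unique, so I must select on each piece a basis for which $B^{-1}b_t$ is feasible and show that the chosen $z(t)$ can be taken continuous across breakpoints; otherwise the per-piece slope bounds would not telescope into a bound on $\lVert z(0)-z(1)\rVert_\infty$. Establishing this continuous, piecewise-affine selection — and confirming that at the endpoints it yields genuine optimal solutions for $b$ and $b'$, which is where the non-emptiness of both $\intSol(A,b,c)$ and $\intSol(A,b',c)$, hence feasibility and boundedness of both relaxations, enters — is the technical heart; the remaining steps are bookkeeping with Cramer's rule, convexity, and the triangle inequality.
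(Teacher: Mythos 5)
The paper itself gives no proof of this proposition---it is imported verbatim as Theorem~5 of Cook et al.---so the comparison is against the original argument. Your decomposition is exactly that argument: one application of Proposition~\ref{p:CookProximity} on each side of the integrality gap (accounting for the additive ``$+2$'') sandwiching a sensitivity bound for the LP relaxations, glued by the triangle inequality. The outer framework is sound: the hypothesis that both $\intSol(A,b,c)$ and $\intSol(A,b',c)$ are non-empty gives you feasibility and boundedness of both relaxations, and your accounting $n\subDet(A)+n\subDet(A)\lVert b-b'\rVert_\infty+n\subDet(A)$ matches the claimed bound, using $\lVert b-b'\rVert_1\le d\lVert b-b'\rVert_\infty\le n\lVert b-b'\rVert_\infty$.

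The one genuine gap is the fractional sensitivity lemma, which you correctly flag as the technical heart but do not close. It is closable by standard parametric right-hand-side programming: along $b_t=(1-t)b+tb'$ the feasible region never becomes empty (if $\bar x,\bar x'$ are feasible for $b,b'$ then $(1-t)\bar x+t\bar x'$ is feasible for $b_t$), boundedness is independent of $t$ because the recession cone $\{r\ge 0: Ar=0\}$ and $c$ are fixed, and at each breakpoint of the parametric simplex method the leaving variable reaches zero exactly when the basis changes, so the basic optimal solutions of consecutive bases coincide at the breakpoint and the selection $z(t)$ is automatically continuous and piecewise affine---no extra selection argument is needed. With that, your Cramer's-rule slope bound and the convex recombination over the optimal face go through. (Cook et al.\ instead establish the LP step by a direct Hoffman-type polyhedral argument, avoiding the homotopy altogether; either route works.) One caveat you inherit from the paper rather than introduce: your cofactor expansion bounds the slope by $(d-1)\times(d-1)$ minors of $A$, and the claim that these are dominated by $\subDet(A)$ as defined here (maximal $d\times d$ subdeterminant) is not true for arbitrary full-rank integral matrices; under Cook et al.'s convention, where $\Delta$ bounds subdeterminants of every order, the step is correct as stated.
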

Note that this implies that $\sens(A,b,b',c)\leq (\lVert b-b'
\rVert_{\infty}+2)\cdot n\cdot \subDet(A)$. 

The Hadamard inequality states that the determinant of a quadratic matrix $N^{n \times n}$ with with columns $n_i$ is bounded by $\text{det}(N) \leq \prod_{i=1}^n \lVert n_i \rVert$ \cite{hadamard1893resolution}. This implies that $\subDet(A) \leq \Delta^{d}\cdot
d^{d/2}$. As $n \leq (2 \Delta + 1)^d$ (maximum number of distinct
columns), we can bound $n \cdot \subDet(A)$ by $((2 \Delta + 1)^d) \cdot d =
\Delta^{\Theta(d)}$.
Surprisingly, these bounds do not depend on the objective function $c$ nor on the
size of $b$ (only the sensitivity depends on the distance between $b$ and $b'$)
but only on the matrix~$A$. We will thus often drop the objective function from
our notation and write $\prox(A,b)$ (resp.~$\sens(A,b,b')$) to reflect this. 

While it is known that these bounds are tight, all known examples either have a
very small value of $\Delta=1$,  use negative entries in the constraint
matrix, and have a non-integral right-hand side~\cite{DBLP:books/daglib/0090562}. 

Hence, these lower bounds often do not correspond to instances from
algorithmic problems.
Nevertheless, knowing the exact bounds is often helpful.
For example the exponent denoted $C(A_{\delta})$ in the running time of the algorithm
in~\cite{DBLP:conf/waoa/JansenR11} is just an upper bound on the proximity of
the underlying configuration IP. Hence, improving this upper bound would
directly lead to a better running time.
Another example concerning the sensitivity comes from the field of online
algorithms.
Often times, the requirement that decisions are not allowed to be rewinded is
too strict.
Hence, \cite{DBLP:journals/mor/SandersSS09} introduced the model of the
\emph{migration factor} where a bounded amount of rewinding is allowed.
The migration factor in their work and in many others
(e.\,g.~\cite{DBLP:journals/mp/EpsteinL09,DBLP:journals/siamjo/EpsteinL13,DBLP:journals/siamdm/JansenK19,DBLP:journals/mor/SkutellaV16})
are simply given by the sensitivity of the underlying IPs. 
Again, any improvement on the general sensitivity results would directly improve
these migration factors. 

This work presents for each $\Delta > 0$ and each $d > 0$ ILPs of the
above type with non-negative constraint matrices such that their proximitiy and
sensitivity are at least $\Delta^{\Theta(d)}$. Note that $\Delta > 0$ and $d >
0$ can be chosen arbitrarily large, however,  we restrict $d$ to be odd or even depending on the case. 

\begin{restatable}{theorem}{sensitivity}
  \label{t:Sensitivity}
  For each $\Delta > 0$ and each even $d > 0$, there is a non-negative matrix
  $A \in \mathbb{Z}_{\geq 0}^{d\times d}$, a right-hand side
  $b \in \mathbb{Z}_{\geq 0}^{d}$, and a right-hand side
  $b' \in \mathbb{Z}_{\geq 0}^{d}$ with $\lVert b-b'
  \rVert_{1}=1$ such that
  $\sens(A,b,b')\geq \Delta^{\Theta(d)}$. Furthermore, the underlying ILP is
  polytopish. 
\end{restatable}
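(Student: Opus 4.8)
The plan is to exploit that here $A$ is square ($n=d$) and of full rank, so the system $Ax=b$ has the unique rational solution $A^{-1}b$; hence $\intSol(A,b)$ is empty unless $A^{-1}b\in\mathbb{Z}^{d}_{\ge 0}$, in which case it is the singleton $\{A^{-1}b\}$ (and the objective $c$ plays no role). When both right-hand sides admit a solution, the sensitivity therefore collapses to $\sens(A,b,b')=\lVert A^{-1}(b-b')\rVert_{\infty}$. Since $\lVert b-b'\rVert_{1}=1$ forces $b-b'=\pm e_{i}$ for some standard basis vector, this equals the $\ell_{\infty}$-norm of a single column of $A^{-1}$. The whole task thus reduces to constructing a non-negative integer matrix with $\lVert A\rVert_{\infty}=\Delta$ whose inverse has an entry of magnitude $\Delta^{\Theta(d)}$, together with a right-hand side $b$ for which the unit perturbation keeps the unique solution non-negative and integral.

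First I would take $A$ to be the bidiagonal matrix with $1$'s on the diagonal and $\Delta$'s on the super-diagonal. It is non-negative, unimodular, and a direct computation gives $(A^{-1})_{ij}=(-\Delta)^{j-i}$ for $j\ge i$ and $0$ otherwise; in particular its last column has leading entry $(-\Delta)^{d-1}$ of magnitude $\Delta^{d-1}=\Delta^{\Theta(d)}$. I would then fix $i=d$, pick a deep-interior non-negative integer vector $x^{\star}$ (for instance $x^{\star}_{j}=\Delta^{d}$), and set $b=Ax^{\star}$ and $b'=b-e_{d}$. Then $A^{-1}b=x^{\star}\ge 0$, while $A^{-1}b'=x^{\star}-(A^{-1})_{\cdot,d}$ is again a non-negative integer vector because the positive entries of the last column of $A^{-1}$ are all dominated by $x^{\star}$; both instances of $\intSol$ are therefore non-empty singletons. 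Since $b,b'\in\mathbb{Z}^{d}_{\ge 0}$, $\lVert b-b'\rVert_{1}=1$, and the two solutions differ by $\Delta^{d-1}$ in the first coordinate, this yields $\sens(A,b,b')\ge\Delta^{d-1}=\Delta^{\Theta(d)}$.

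It remains to certify that the instance is \emph{polytopish}, \ie, that the columns of $A$ arise as a subset of the columns of a configuration ILP. Here I would supply positive item sizes $s_{1},\dots,s_{d}$ and a capacity $C$ such that every column $c_{j}$ of $A$ satisfies $s^{\top}c_{j}\le C$, so that each column is a valid bin configuration; geometrically spaced sizes let all $d$ columns fit under one common capacity, and the lattice points of the resulting knapsack polytope then contain exactly these configurations.

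The step I expect to be the main obstacle is this last one: reconciling the three requirements simultaneously. The sign alternation in $A^{-1}$ means that non-negativity of $A$ is compatible with an exponentially large inverse entry only if $b$ is chosen far in the interior, which is why the perturbed solution must be pushed back by a full $\Delta^{\Theta(d)}$; and forcing all columns to be genuine configurations of one bin restricts the admissible entries and the way the staircase of $\Delta$'s can be laid out. I expect this configuration-realization step to be exactly where the parity restriction on $d$ enters, since closing up the configuration structure against the alternating-sign inverse is what distinguishes the even case treated here from the odd case.
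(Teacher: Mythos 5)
Your sensitivity argument is correct and is essentially the paper's construction in transposed form: the paper uses the lower bidiagonal matrix with $1$'s on the diagonal and $\Delta$'s below, right-hand side $b=(1,\Delta,\dots,\Delta^{d-1})^{\top}$ and $b'=b-e_{1}$, and reads off the two unique solutions $(1,0,\Delta^{2},0,\dots)$ and $(0,\Delta,0,\Delta^{3},\dots)$ directly by forward substitution rather than via $A^{-1}$. Your observation that for a square invertible $A$ the problem collapses to exhibiting a large entry of $A^{-1}$, together with an interior $x^{\star}$ to keep both perturbed solutions non-negative, is a clean and valid packaging of the same idea, and your verification that $A^{-1}b'$ stays in $\mathbb{Z}^{d}_{\ge 0}$ goes through. (Your closing speculation that the parity of $d$ is forced by the configuration-realization step is off the mark --- the parity is only cosmetic, fixing which of the two alternating solution patterns ends in a $0$ versus a $\Delta^{d-1}$ --- but nothing in your argument depends on it.)

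The genuine gap is in the \emph{polytopish} step, where you have misread the definition. An ILP is polytopish if its set of columns is \emph{exactly} $\mathcal{P}\cap\mathbb{Z}^{d}$ for some polytope $\mathcal{P}$, not if the columns embed as a \emph{subset} of the configurations of some knapsack polytope. Your plan --- choose item sizes $s$ and a capacity $C$ with $s^{\top}c_{j}\le C$ for every column $c_{j}$ --- only yields containment: the knapsack polytope $\{y\ge 0 : s^{\top}y\le C\}$ with positive sizes always contains extra lattice points (the origin, every unit vector, every dominated configuration), so its integer points do not ``contain exactly these configurations'' as you assert. That subset-of-configurations construction is what the paper uses for the separate Bin Packing \emph{corollary}, and there it only works because an auxiliary objective function penalizes all columns outside the chosen set --- a device unavailable here, since the theorem's sensitivity claim is for the zero objective. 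The correct route, which your matrix also supports, is to take $\mathcal{P}=\conv\{A_{1},\dots,A_{d}\}$ and show $\mathcal{P}\cap\mathbb{Z}^{d}=\{A_{1},\dots,A_{d}\}$: for a convex combination $\sum_{j}x_{j}A_{j}$ with $\sum_{j}x_{j}=1$, the staircase structure forces the coefficients to be integral one row at a time (in your orientation, row $d$ gives $x_{d}\in\{0,1\}$, then row $d-1$ gives $x_{d-1}\in\mathbb{Z}$, and so on), so exactly one $x_{j}$ equals $1$. This is the content of the paper's Claim~\ref{claim:sensitivity:polytopish} and is the missing piece of your write-up.
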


\begin{restatable}{theorem}{proximity}
  \label{t:Proximity}
  For each $\Delta \geq 2$ and each odd $d > 0$, there is a non-negative matrix
  $A \in \mathbb{Z}_{\geq 0}^{15d\times 15d+6}$ and a right-hand
  side $b \in \mathbb{Z}_{\geq 0}^{15d}$ such that
  $\prox(A,b)\geq \Delta^{\Theta(d)}$.
  Furthermore, the underlying ILP is
  polytopish. 
\end{restatable}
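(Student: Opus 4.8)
The plan is to give an explicit construction rather than an existence argument, staying inside the configuration-ILP viewpoint advertised in the abstract: the columns of $A$ will be configurations (non-negative integer points of a fixed polytope, i.\,e.\ multisets of items fitting into one bin), the rows will be item types, and $b$ will be a demand vector, so that non-negativity and the polytopish property hold by design. The engine driving the $\Delta^{\Theta(d)}$ gap is a constant-size \emph{amplifier gadget}: a sub-ILP on $O(1)$ rows and columns with all entries in $\{0,1,\dots,\Delta\}$, whose unique optimal fractional solution fixes a designated ``output'' coordinate to a genuinely fractional value while its unique optimal integral solution is forced to differ from that value by a factor of $\Delta$. Intuitively one configuration packs $\Delta$ copies of the relevant item and a competing configuration packs $\Delta-1$ copies, so that meeting an integral demand exactly forces a $\Delta$-fold blow-up in how often the extreme configuration is used. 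I would first build and verify this single gadget in isolation, pinning down its local fractional optimum $z^{*}_{\mathrm{loc}}$ and its integral optimum by a direct optimality check.

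Next I would chain $d$ copies of the gadget in series through shared \emph{linking} rows, so that the output coordinate of stage $i$ feeds the input of stage $i+1$; this is the source of the $15$ rows and $15$ columns per stage and of the $+6$ global columns, which handle linking together with a constant number of normalising and capacity columns. Because each stage multiplies the forced discrepancy of the previous one by $\Delta$, the discrepancies telescope and the final output coordinate is pinned by the LP at a value that is $\Delta^{\Theta(d)}$ away from its nearest integral realisation. Concretely I would define the global fractional solution $z^{*}$ by concatenating the local optima, verify $Az^{*}=b$, and conclude $\prox(A,b)\ge \min_{x\in\intSol(A,b)}\lVert z^{*}-x\rVert_{\infty}\ge \Delta^{\Theta(d)}$ once $z^{*}$ is shown optimal and every integral optimum is shown far. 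I expect the restriction to odd $d$ to be used to fix a consistent sign pattern of the telescoping relations, so that the stagewise discrepancies accumulate monotonically instead of partially cancelling.

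Two optimality certificates then remain. For the relaxation I would exhibit a dual vector $y$ with $A^{\top}y\le c$ that is tight exactly on the support of $z^{*}$, so that complementary slackness certifies $z^{*}\in\fracSol(A,b)$ and, by uniqueness of the tight face, that $z^{*}$ is the \emph{only} fractional optimum (making the outer maximum in the definition of $\prox$ trivial to evaluate). For the integer side I would establish a modular invariant: every non-negative integer combination of columns summing to $b$ satisfies, stage by stage, a congruence that $z^{*}$ violates, which forces the integral output coordinate into the far regime; together with the objective this shows that every $x\in\intSol(A,b)$ realises the claimed $\ell_{\infty}$ distance. Proposition~\ref{p:CookProximity} serves throughout as a sanity check that $\Delta^{\Theta(d)}$ is the correct order of magnitude and cannot be exceeded.

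The main obstacle is not the exponential arithmetic but the \emph{non-negativity} of $A$: encoding the signed linking relations of a clean staircase matrix using only non-negative configurations forces in slack and auxiliary columns, and one must ensure these do not create a cheap integral \emph{shortcut} that stays close to $z^{*}$ and thereby collapses the gap. I would control this with a single global potential function that is determined by $b$, changes by a bounded amount per configuration used, and takes incompatible values at $z^{*}$ and at any nearby integer point; checking that the six extra columns and the per-stage slacks respect this potential is the delicate, instance-specific heart of the proof, and is exactly where the $15d\times(15d+6)$ bookkeeping has to be chosen carefully rather than guessed.
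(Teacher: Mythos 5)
Your high-level architecture --- a constant-size gadget that forces the linear relaxation into a fractional corner, followed by a non-negative $\Delta$-staircase that amplifies a constant discrepancy into $\Delta^{\Theta(d)}$ --- matches the paper's construction in outline, and your guess about the role of odd $d$ (fixing the parity of an alternating cascade) is essentially right. But there is a genuine gap: the amplifier gadget, which is the entire content of the construction, is never actually supplied, and the mechanism you sketch for it would not produce a proximity gap. A gadget in which ``one configuration packs $\Delta$ copies and a competing one packs $\Delta-1$ copies'' creates tension between two \emph{integral} solutions for nearby right-hand sides; that is the sensitivity example (the bidiagonal $1,\Delta$ matrix of Theorem~\ref{t:Sensitivity}), whereas for proximity you need a feasible \emph{fractional} point that is provably far from \emph{every} feasible integral point. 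The paper gets this from a specific combinatorial object: the $15\times 6$ edge--perfect-matching incidence matrix $M$ of the Petersen graph. Because every edge lies in exactly two perfect matchings, taking each matching column with coefficient $1/2$ covers all $15$ edge-rows exactly once; because every two matchings share an edge, no $0/1$-combination of matching columns can do the same, so any integral solution must instead light up the first identity block. This single obstruction sits only in the top $15$ rows (the $+6$ columns are exactly the six matchings, not ``linking and normalising columns''), and the staircase of $I$ and $\Delta\cdot I$ blocks then passively propagates the resulting disagreement into complementary supports of size $\Delta^{\Theta(d)}$ in the last block. Nothing is multiplied stage by stage by a re-used gadget, so your plan of chaining $d$ gadget copies whose discrepancies telescope is both unsubstantiated and structurally different from what the dimensions $15d\times(15d+6)$ actually encode.

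A second, smaller issue: the paper takes $c\equiv 0$, so every feasible fractional point is optimal and $\prox(A,b)=\dist(\fracSol(A,b),\intSol(A,b))$ is a \emph{maximum} over fractional optima; to prove a lower bound it suffices to exhibit one far fractional solution. Your complementary-slackness argument for uniqueness of $z^{*}$ therefore solves the wrong problem, and arranging a nonzero objective whose tight face is a single fractional vertex would only complicate the non-negativity and shortcut issues you yourself flag. Until you can write down an explicit constant-size, non-negative system whose fractional feasible set provably contains a point at distance $\Omega(1)$ (in the top block) from every integral feasible point --- the role the Petersen matchings play --- the proof does not go through.
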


\paragraph*{Polytopish Integer (Linear) Programs}
This work considers a special case of integer (linear) programs where each
variable corresponds to an integral point within a polytope $\mathcal P\subseteq \mathbb R^d$. The corresponding Integer Linear Program is defined by 
\begin{align*}
 &  \min c^{\top}x \\
 &  \sum_{p\in\mathcal P\cap \mathbb Z^d} x_p p = b \\
 &  x \in\mathbb Z^{|\mathcal P \cup \mathbb{Z}^d|}_{\ge 0} .
\end{align*}
We call such an ILP \emph{polytopish}. 

  These ILPs often arise in the context of algorithmic applications.
  Probably the most famous one among such ILPs is the \emph{configuration ILP} introduced by
  Gilmore and Gomory~\cite{gilmore1961linear} and used for many
  packing and scheduling problems (e.\,g.~\cite{alon1998approximation,DBLP:conf/soda/GoemansR14,DBLP:conf/innovations/JansenKMR19,DBLP:conf/icalp/JansenKV16,DBLP:journals/corr/abs-1909-11970}).
  
The origin of the configuration ILP lies in the Bin Packing problem. There we are given $n$ items with sizes $s_1, \dots, s_n\leq 1$. The objective is to pack these items into as few unit-sized bins as possible. 
As some
sizes may be equal, i.\,e., $\{s_{1},\ldots,s_{n}\}=\{s_{1},\ldots,s_{d}\}$ for
some $d < n$, we can rewrite the instance as a multiplicity vector of sizes,
\ie, $(b_1, \dots, b_d)$ where the $i$th item size occurs $b_i$ times. Further
we define configurations. A configuration $k = (k_1, \dots, k_d) \in
\mathbb{Z}^d_{\geq 0}$ is a multiplicity vector of item sizes such that the sum
of their sizes is at most $1$, \ie, $k \cdot (s_1, \dots, s_d)^T \leq
1$, hence a feasible packing for a bin. Define the constraint matrix as the set
of feasible configurations (one configuration per column). We now aim to find a
set of configurations such that we cover each item, \ie, the multiplicities of
the item sizes of the chosen configurations equal the occurrences of the item
sizes from the input.
The goal is to minimize the number of used configurations (including how often they are chosen).
Note that all fractional solutions to the constraint $k\cdot
(s_{1},\ldots,s_{d})^{\top}\leq 1$ describe a knapsack polytope $\mathcal{P} := \mathcal{P}_{s_{1},\ldots,s_{d}}$.
Hence, the integer linear program (\ie, the \emph{configuration ILP}) can be written as

\begin{align*}
 &  \min \lVert x \rVert_{1} \\
 & \sum_{p\in\mathcal{P}\cap \mathbb Z^d} x_p p = b \\
 & x \in\mathbb Z^{|\mathcal P \cup \mathbb{Z}^d|}_{\ge 0} .
\end{align*}

We define item sizes such that the set $C_1$ of columns in the examples for the general ILPs are a subset of configurations for the Bin Packing problem. Then we
define an objective function where all values corresponding to the
configurations $k \in C_1$ get value $0$ and the remaining ones value $1$. To
minimize this function we thus cannot take other columns than the ones in~$C_1$.
Setting the right-hand side as for the general ILPs this essentially yields the
same examples. Thus the same bounds are achieved.
This construction shows that in order to improve bounds on the proximity or
sensitivity of the Bin Packing problem the objective function $\min \lVert x
\rVert_{1}$ needs to be taken into account.

\paragraph*{Related Work}
Already in 1986, Cook \etal proved upper bounds regarding the proximity and
sensitivity for general Integer Linear
Programs~\cite{DBLP:journals/mp/CookGST86}, see
Proposition~\ref{p:CookSensitivity} and Proposition~\ref{p:CookProximity}.
Still, these classical bounds are state-of-the-art. This rises the question if these bounds are tight. In this work we answer this affirmatively.

For the case of $d = 1$, Aliev \etal present a tight lower bound regarding
proximity of $\lVert x-z\rVert_{\infty} \leq \Delta -
1$~\cite{aliev2019distances}. Further settings were studied, such as separable convex objective functions~\cite{DBLP:journals/jacm/HochbaumS90} or mixed integer constraints~\cite{DBLP:journals/mp/PaatWW20}.

Recently, another proximity bound independent of $n$ was proven by Eisenbrand and Weismantel~\cite{DBLP:journals/talg/EisenbrandW20}. Using the Steinitz lemma, they show that the $\ell_1$-distance of an optimal fractional solution $z$ and its corresponding integral solution $x$ is bounded by $\lVert x-z\rVert_1 \leq m \cdot (2m\Delta+1)^m$. This result also holds when upper bounds for the variables are present. This result is improved to $\lVert x-z\rVert_1 < 3m^2 \log(2\sqrt{m} \cdot \Delta^{1/m})\cdot \Delta$ using sparsity~\cite{lee2020improving}. 

For a special sub-case of Integer Linear Programs where the constraint matrix
consists of non-zero entries only in the first $r$ rows and in blocks of size $s
\times t$ in the diagonal beneath, sensitivity and proximity results were also obtained. For these so-called $n$-fold ILPs it holds that if $x$ is a solution to a right-hand side $b$ and the right-hand side changes to $b'$ still admitting a finite, optimal solution $x'$ then $\lVert x-x' \rVert_1 \leq \lVert b-b'\rVert_1 \cdot O(rs\Delta)^{rs}$~\cite{DBLP:conf/icalp/JansenLR19}.  In turn, it was shown that the proximity is bounded by $\lVert x-z\rVert_1 \leq (rs\Delta)^{O(rs)}$~\cite{DBLP:journals/corr/abs-2002-07745}. Note that both bounds are independent of the number of rows and columns of the complete constraint matrix.
  
\section{Sensitivity of ILPs} 
This section provides lower bounds for the sensitivity of ILPs. First, we present an example for general ILPs. Then we show how we can use this example to prove the same bound for the ILP which arises from the Bin Packing polytope. 

\subsection{Sensitivity of General ILPs}
This section proves the sensitivity bound for general ILPs, i.\,e., a lower bound
on the  distance
between $\intSol(A,b)$ and $\intSol(A,b')$. 
Let $d$ be an even number and $\Delta\in\mathbb{N}_{>0}$.
We consider the following ILP (I) with an objective function $c\equiv
\textbf{0}$ (corresponding to no objective function).

  \begin{equation*}
  \tag{I}
    \underbrace{\begin{pmatrix}	
      1 		& 0			&  \sdots  	& 0 		& 0 \\
      \Delta & 1  		& \sdots 	& 0 		& 0 \\
      0 		& \Delta 	&  \sdots 	& 0 		& 0 \\
      \svdots & \svdots 	&  \sddots 	& \svdots & \svdots \\
      0 		& 0 		&  \sdots 	& 1 		& 0 \\
      0 		& 0 		& \sdots 	& \Delta & 1 \\
    \end{pmatrix}}_{=: A}
    x
    =
    \begin{pmatrix}
     1 \\
     \Delta \\
     \Delta^2 \\
     \svdots \\
     \Delta^{d-2} \\
     \Delta^{d-1}
    \end{pmatrix}
  \end{equation*}

  Note that ILP (I) is polytopish.
  To see this, let $A_{1},\ldots,A_{d}$ be the columns of the ILP and define
  $\mathcal{P}=\conv\{A_{1},\ldots,A_{d}\}$ as the convex hull of the columns. 
In the following, we prove that the integer points in $\mathcal{P}$ are exactly the columns themselves.

 \begin{myclaim}
   \label{claim:sensitivity:polytopish}
It holds that $\conv \{A_1, A_2,\dotsc, A_d\} \cap \mathbb Z^{d} = \{A_1, A_2, \dotsc, A_d\}$.
\end{myclaim}
\begin{proof}
Let $x_1, \dotsc, x_n$ be a convex combination of the columns where $Ax$ is an integer point.
Let $i$ be the first column with $0 < x_i < 1$. The $i$th row appears with a non-zero entry
only in the columns $i$ and $i-1$. Since $x_{i-1}$ is not fractional, the $i$th entry
of $Ax$ is fractional. This is a contradiction. Hence, there are no fractional values in $x$
and therefore exactly one is $1$ and all others are $0$. \qed
\end{proof}
Hence, the ILP is polytopish. 
Next we prove the main result of this section.

\sensitivity*
\begin{proof}
An optimal solution to the ILP above is clearly unique (Note that we set the objective function to zero, thus optimality corresponds to feasibility.). We have only one column with a non-zero entry for the first row. Thus, the high-hand side $b$ determines this value. By that, we have only one free, non-zero variable for the second row. Using this argument inductively we get a unique solution of form $x = (1, 0, \Delta^2, 0, \Delta^4, \dots, \Delta^{d-2}, 0)$. 

If we now change the first entry of the right-hand side to $0$, we get again a
unique solution for $b'$ due to the same argument as above. The solution is of
form: $x' = (0, \Delta, 0, \Delta^3, \dots, \Delta^{d-1})$. Obviously, the
difference is $\lVert x - x'\rVert_1 \geq \lVert b-b'\rVert_1
\Delta^{\Theta(d)}$ implying the statement.
The ILP is polytopish due to Claim~\ref{claim:sensitivity:polytopish}. 

\qed
\end{proof}

\subsection{Sensitivity of the Bin Packing ILP}
Let us now construct an example where the sensitivity for the Bin Packing
polytope is large. In this problem, we are given $n$ items with $d$ different
sizes. Define these sizes as $s_i = 1/(2\Delta) + i \cdot \epsilon$ for $i = 1,
\dots, d$ and some $\epsilon > 0$ with $\epsilon \leq \frac{1}{4(d-1+\Delta
  d)}$. Obviously, the constraint matrix from the previous example is a subset
of feasible configurations, \ie, a subset of the columns of the constraint
matrix for this problem, as
\begin{align*}
  &s_{i}+\Delta s_{i+1}\leq s_{d-1}+\Delta s_{d}=1/(2\Delta)+(d-1)\epsilon + 1/2 +\Delta d \epsilon =\\
  &1/2 + 1/(2\Delta)+ \epsilon (d-1+\Delta d)\underbrace{\leq}_{d \geq 2} 1/2 + 1/4 +\epsilon (d-1+\Delta d) \\
  & \underbrace{\leq}_{\epsilon \leq \frac{1}{4(d-1+\Delta
      d)}} 1/2 + 1/4 + 1/4 = 1.
\end{align*}
Define by $C_1$ the set of these columns. Let us now define a linear objective function $c$ which has a $0$ entry for each configuration $k \in C_1$ and $1$ otherwise. Thus to minimize the objective function we can only choose configurations from~$C_1$. Setting and changing the right-hand side as in the previous example will clearly lead to the same sensitivity bound. Combining it with the result of Cook \etal we thus get:

\begin{corollary}
  There is an objective function $c$ such that for the configuration ILP with
  constraints $A$ and right-hand sides $b$ and $b'$ we have
  $\sens(A,b,b',c)\geq \Delta^{\Theta(d)}$. 
\end{corollary}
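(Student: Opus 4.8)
The plan is to reduce the configuration ILP equipped with the crafted objective $c$ back to the general ILP~(I), so that the sensitivity lower bound of Theorem~\ref{t:Sensitivity} is inherited verbatim. The whole purpose of choosing $c$ to be $0$ on the configurations in $C_1$ and $1$ on every other feasible configuration is to force every optimal solution to be supported entirely on $C_1$; once this is established, the configuration ILP collapses to ILP~(I) on the relevant coordinates and the claim follows.

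Concretely, I would proceed in four steps. First, I exhibit a feasible solution of cost $0$ for the right-hand side $b=(1,\Delta,\dots,\Delta^{d-1})^{\top}$: take the unique solution $x=(1,0,\Delta^2,0,\dots,\Delta^{d-2},0)$ of ILP~(I) from the proof of Theorem~\ref{t:Sensitivity}, extend it by $0$ on all coordinates outside $C_1$, and note that its cost is $0$ since every column it uses lies in $C_1$. Hence the optimum value of the configuration ILP for $b$ is $0$. Second, any optimal solution $y$ must then satisfy $c^{\top}y=0$, which by non-negativity of $c$ and $y$ forces $y_p=0$ for every configuration $p\notin C_1$, so $y$ is supported on $C_1$. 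Third, because $C_1$ is precisely the set of $d$ columns of the matrix $A$ of ILP~(I), the restriction of $\sum_p y_p p=b$ to the support $C_1$ is exactly ILP~(I), whose solution is unique by the row-by-row argument of Theorem~\ref{t:Sensitivity}. Thus $\intSol(A,b,c)=\{x\}$ is a singleton, and the identical argument for $b'$ (with $\lVert b-b'\rVert_1=1$) yields $\intSol(A,b',c)=\{x'\}$ for $x'=(0,\Delta,0,\Delta^3,\dots,\Delta^{d-1})$.

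Fourth, I compute the sensitivity directly. Since both solution sets are singletons, $\sens(A,b,b',c)=\dist(\{x\},\{x'\})=\lVert x-x'\rVert_{\infty}$, and as the two solutions agree (at $0$) on the additional coordinates, this equals the distance already analysed in Theorem~\ref{t:Sensitivity}. The coordinatewise difference is $(1,-\Delta,\Delta^2,\dots,-\Delta^{d-1})$, whose largest absolute entry is $\Delta^{d-1}=\Delta^{\Theta(d)}$, which gives the claimed lower bound. Combining this with the upper bound of Proposition~\ref{p:CookSensitivity} then certifies that $\Delta^{\Theta(d)}$ is the correct order, i.e.\ the bound is tight even for the configuration ILP.

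I expect the only genuinely delicate point to be the combination of the second and third steps: one must be sure that the objective really pins every optimal solution inside $C_1$, so that none of the many additional feasible configurations of the knapsack polytope can enter an optimum, after which uniqueness is immediate from ILP~(I). Both rely on the feasibility of the $C_1$-configurations, which the inequality chain preceding the corollary already guarantees through the choice $\epsilon\le \tfrac{1}{4(d-1+\Delta d)}$; everything else is bookkeeping. The $\ell_1$-versus-$\ell_\infty$ discrepancy between the statement of Theorem~\ref{t:Sensitivity} and the $\dist$-definition causes no trouble here, since the single coordinate of size $\Delta^{d-1}$ already realises the bound in the $\ell_\infty$-norm.
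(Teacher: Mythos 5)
Your proposal is correct and follows essentially the same route as the paper: force optimal solutions onto the $C_1$ columns via the $0/1$ objective, observe that the system restricted to $C_1$ is exactly ILP~(I) with its unique solutions for $b$ and $b'$, and inherit the $\Delta^{\Theta(d)}$ bound. You merely spell out the steps (existence of a cost-$0$ solution, support argument, singleton solution sets, the $\ell_\infty$ computation) that the paper leaves implicit.
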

Hence, if one aims to improve the sensitivity of the configuration ILP,
the special objective function $\lVert x \rVert_{1}$ needs to be taken into
account.

\section{Proximity of ILPs}
This section presents an example for general ILPs, where the optimal integer
solution $x$ differs greatly from the corresponding fractional solution $z$, \ie,  $\lVert x - z\rVert_1 =\Delta^{\Theta(d)}$. By this, we give a lower bound on the proximity of general ILPs which meets the upper bound for ILPs shown by Cook \etal implying their tightness. Further we use this example to construct an instance of the Bin Packing problem where the same bound is met.

\subsection{Proximity of General ILPs}
To construct this example we make use of the Petersen graph. This graph $P = (V,
E)$ has fifteen edges, ten vertices and six perfect matchings. A perfect matching $M$ is a set of edges such that each vertex $v \in V$ is part of exactly one edge, i.\,e., there exists exactly one edge $e = (u, w) \in M$ satisfying $v = u$ or $v = w$. 
The Petersen graph has the nice property that every edge is part of exactly two perfect matchings and every two perfect matchings share exactly one edge \cite{akiyama2011matchings}. The graph and its perfect matchings are displayed in Figure~\ref{f:Petersen}.
  
    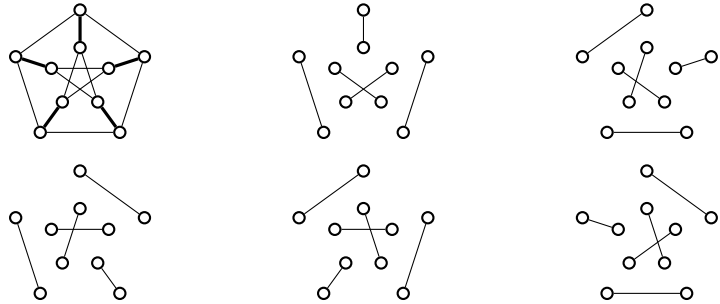
\begin{figure}[h]
    \centering
    
\begin{minipage}[b]{0.3\linewidth}
    \centering
   \begin{tikzpicture}[every node/.style={draw,circle,inner sep  = 1.5, thick}]
  \foreach \name/\angle/\text in {a-1/234/, a-2/162/, 
                                  a-3/90/, a-4/18/, a-5/-54/}
    \node[xshift=9cm,yshift=.5cm] (\name) at (\angle:0.9cm) {$\text$};
    
      \foreach \name/\angle/\text in {b-1/234/, b-2/162/, 
                                  b-3/90/, b-4/18/, b-5/-54/}
    \node[xshift=9cm,yshift=.5cm] (\name) at (\angle:0.4cm) {$\text$};

\draw (a-1) -- (a-2);
\draw (a-2) -- (a-3);
\draw (a-3) -- (a-4);
\draw (a-4) -- (a-5);
\draw (a-5) -- (a-1);

\draw (b-1) -- (b-3);
\draw (b-1) -- (b-4);
\draw (b-2) -- (b-4);
\draw (b-2) -- (b-5);
\draw (b-3) -- (b-5);

\draw[very thick] (a-1) -- (b-1);
\draw[very thick] (a-2) -- (b-2);
\draw[very thick] (a-3) -- (b-3);
\draw[very thick] (a-4) -- (b-4);
\draw[very thick] (a-5) -- (b-5);
\end{tikzpicture}
\end{minipage}
\begin{minipage}[b]{0.3\linewidth}
    \centering
   \begin{tikzpicture}[every node/.style={draw,circle, inner sep = 1.5, thick}]
   \foreach \name/\angle/\text in {a-1/234/, a-2/162/, 
                                  a-3/90/, a-4/18/, a-5/-54/}
    \node[xshift=9cm,yshift=.5cm] (\name) at (\angle:0.9cm) {$\text$};
    
      \foreach \name/\angle/\text in {b-1/234/, b-2/162/, 
                                  b-3/90/, b-4/18/, b-5/-54/}
    \node[xshift=9cm,yshift=.5cm] (\name) at (\angle:0.4cm) {$\text$};

\draw (a-4) -- (a-5);
\draw (a-1) -- (a-2);
\draw (a-3) -- (b-3);
\draw (b-1) -- (b-4);
\draw (b-2) -- (b-5);
\end{tikzpicture}
\end{minipage}
\begin{minipage}[b]{0.3\linewidth}
    \centering
   \begin{tikzpicture}[every node/.style={draw,circle, inner sep = 1.5, thick}]
   \foreach \name/\angle/\text in {a-1/234/, a-2/162/, 
                                  a-3/90/, a-4/18/, a-5/-54/}
    \node[xshift=9cm,yshift=.5cm] (\name) at (\angle:0.9cm) {$\text$};
    
      \foreach \name/\angle/\text in {b-1/234/, b-2/162/, 
                                  b-3/90/, b-4/18/, b-5/-54/}
    \node[xshift=9cm,yshift=.5cm] (\name) at (\angle:0.4cm) {$\text$};

\draw (a-2) -- (a-3);
\draw (a-1) -- (a-5);
\draw (a-4) -- (b-4);
\draw (b-1) -- (b-3);
\draw (b-2) -- (b-5);
\end{tikzpicture}
\end{minipage}

\vspace{0.3cm}

\begin{minipage}[b]{0.3\linewidth}
    \centering
   \begin{tikzpicture}[every node/.style={draw,circle, inner sep = 1.5, thick}]
   \foreach \name/\angle/\text in {a-1/234/, a-2/162/, 
                                  a-3/90/, a-4/18/, a-5/-54/}
    \node[xshift=9cm,yshift=.5cm] (\name) at (\angle:0.9cm) {$\text$};
    
      \foreach \name/\angle/\text in {b-1/234/, b-2/162/, 
                                  b-3/90/, b-4/18/, b-5/-54/}
    \node[xshift=9cm,yshift=.5cm] (\name) at (\angle:0.4cm) {$\text$};

\draw (a-1) -- (a-2);
\draw (a-3) -- (a-4);
\draw (a-5) -- (b-5);
\draw (b-1) -- (b-3);
\draw (b-2) -- (b-4);
\end{tikzpicture}
\end{minipage}
\begin{minipage}[b]{0.3\linewidth}
    \centering
   \begin{tikzpicture}[every node/.style={draw,circle, inner sep = 1.5, thick}]
   \foreach \name/\angle/\text in {a-1/234/, a-2/162/, 
                                  a-3/90/, a-4/18/, a-5/-54/}
    \node[xshift=9cm,yshift=.5cm] (\name) at (\angle:0.9cm) {$\text$};
    
      \foreach \name/\angle/\text in {b-1/234/, b-2/162/, 
                                  b-3/90/, b-4/18/, b-5/-54/}
    \node[xshift=9cm,yshift=.5cm] (\name) at (\angle:0.4cm) {$\text$};

\draw (a-3) -- (a-2);
\draw (a-5) -- (a-4);
\draw (a-1) -- (b-1);
\draw (b-5) -- (b-3);
\draw (b-2) -- (b-4);
\end{tikzpicture}
\end{minipage}
\begin{minipage}[b]{0.3\linewidth}
    \centering
   \begin{tikzpicture}[every node/.style={draw,circle, inner sep = 1.5, thick}]
   \foreach \name/\angle/\text in {a-1/234/, a-2/162/, 
                                  a-3/90/, a-4/18/, a-5/-54/}
    \node[xshift=9cm,yshift=.5cm] (\name) at (\angle:0.9cm) {$\text$};
    
      \foreach \name/\angle/\text in {b-1/234/, b-2/162/, 
                                  b-3/90/, b-4/18/, b-5/-54/}
    \node[xshift=9cm,yshift=.5cm] (\name) at (\angle:0.4cm) {$\text$};

\draw (a-1) -- (a-5);
\draw (a-3) -- (a-4);
\draw (a-2) -- (b-2);
\draw (b-1) -- (b-4);
\draw (b-5) -- (b-3);
\end{tikzpicture}
\end{minipage}
    \caption{The first sub-figure presents the complete Petersen graph with one perfect matching marked by thick edges. The remaining sub-figures each present one of the remaining five perfect matchings.}
    \label{f:Petersen}
  \end{figure}

The Petersen graph is named after its appearance in a paper written by Petersen \cite{Petersen98} in 1898. However, it was first mentioned as far back as 1886 \cite{Kempe86}. This graph is often used to construct counter-examples for various conjectures due to its neat structure and nice properties. For example it was used in \cite{DBLP:journals/mp/CapraraDDIR15} to construct small examples where the Round-up Property for Bin Packing instances does not hold. For a survey concerning this graph and more applications we refer to \cite{holton1993petersen}.

We set up a constraint matrix $M \in \{0, 1\}^{15 \times 6}$ where every row represents an edge in the Petersen graph and every column corresponds to the indicator vector of one of the perfect matchings. Denote by $I$ the identity matrix of size $(15 \times 15)$. An identity matrix is a matrix where all entries are zero except the diagonal being~$1$. Further let $\Delta \ge 2$ and $d$ be an odd number. Construct the ILP~(II) as follows, where the objective function is again zero, i.e.,  $c\equiv
\textbf{0}$:

  \begin{equation*}
  \tag{II}
    \underbrace{\begin{pmatrix}
      M & I & 0 & \sdots & 0 & 0 \\
      0 & \Delta \cdot  I & I & \sdots & 0 & 0 \\
      0 & 0 & \Delta \cdot I & \sdots & 0 & 0 \\
      \svdots & \svdots & \svdots & \sddots & \svdots & \svdots \\
      0 & 0 & 0 & \sdots & I & 0 \\
      0 & 0 & 0 & \sdots & \Delta \cdot I & I \\
    \end{pmatrix}}_{=: A}
    x
    =
    \begin{pmatrix}
    (1, \dots, 1)^T \in \mathbb{N}^{15} \\
    (\Delta, \dots, \Delta)^T \in \mathbb{N}^{15} \\
    ( \Delta^2, \dots,  \Delta^2)^T \in \mathbb{N}^{15} \\
     \svdots \\
     ( \Delta^{d-1}, \dots,  \Delta^{d-1})^T \in \mathbb{N}^{15} \\
     (\Delta^{d}, \dots, \Delta^{d})^T \in \mathbb{N}^{15} \\
    \end{pmatrix}.
  \end{equation*}
  Obviously, the number of columns is $n = 6+15 \cdot d$.
Call the first six columns corresponding to the perfect matchings \emph{matching columns}. 
Further, we want a solution where $z \in [0,1]^{n}$ for the fractional case and $x \in
\{0, 1\}^{n}$ for the integral one.
To show that the ILP is polytopish, we argue as before. First, let
$A_{1},\ldots,A_{n}$ be the columns of $A$ and define
$\mathcal{P}=\conv\{A_{1},\ldots,A_{n}\}$. 
Next, we argue that the integer points in $\mathcal{P}$ are again only the columns themselves.

\begin{myclaim}
  \label{claim:proximity:polytopish}
It holds that $\conv \{A_1, A_2,\dotsc, A_n\} \cap \mathbb Z^{6+15d} = \{A_1, A_2, \dotsc, A_n\}$.
\end{myclaim}
\begin{proof}
  Let $x_1, \dotsc, x_n\in [0,1]$ with $\sum_{i=1}^{n}x_{i} = 1$ such that 
  $A(x_{1}\ldots,x_{n})$ is integral. 
First suppose that $0 < x_i < 1$ for some column $1 \leq i \leq 6$ corresponding
to a matching column.
Then, to obtain an integral point $A(x_{1},\ldots,x_{n})$, we need 
to also choose another set of columns $J\subseteq \{1,\ldots,21\}\setminus
\{i\}$ with $x_{j} > 0$ for all $j\in J$.
For each such $j$, there is a row $r_{j}$, where column $j$ has a value $0$
where columns $i$ has value $1$, as two matchings only share one edge and the
identity matrix only has one non-zero entry in each column.
Hence, one cannot choose the coefficients $x_{j}$ for $j\in J$ such that
$A(x_{1},\ldots,x_{n})$ is integral and $\sum_{i=1}^{n}x_{i} = 1$ holds.
Hence, the coefficients of the matching columns must be integral.

Now, consider the remaining columns.
Suppose there is a column $i> 6$ with $0 < x_{i} < 1$.
If $i\leq 21$, this column corresponds to the first identity matrix.
As no other column has entries in the first $15$ rows, the resulting point
$A(x_{1},\ldots,x_{n})$ cannot be integral.
If $i \leq 36$, the only other columns that have non-zero entries have index
$\leq 21$ and can thus not be fractional.
Using this argument inductively, we see that all solutions for this ILP are integral and thus the assumption holds.  \qed
 \end{proof}
  
Next we want to estimate the $\ell_1$ norm of a (fractional) solution. Define $p = \sum_{i=1}^{(d-1)/2} \Delta^{2i-1}$ and $q = \sum_{i=1}^{(d-1)/2} \Delta^{2i-2} = \Delta \cdot \sum_{i=0}^{(d-1)/2} \Delta^{2i-1}  = \Delta \cdot p$. 

\begin{myclaim} \label{c:Norm}
The $\ell_1$-norm of any (fractional) solution $x$ is at least $\lVert x \rVert_1 = \lVert y \rVert_1 + \lVert w \rVert_1 \geq \lVert y\rVert_1 + (15 - \lVert y\rVert|_1) \cdot  \Delta \cdot p + \lVert y\rVert_1\cdot p$.
\end{myclaim}
\begin{proof}
Consider a (fractional) solution $x = (y, w)$, where $y$ corresponds to the first $6$ columns, \ie, to the matching columns. Likewise, divide $A=(B, C)$ into matching and non-matching columns. The value for the right-hand side given by $y$ covers some part of the first 15 rows, namely $0 \le \lVert B y \rVert_1 \le 15$, as we can choose at most $3$ columns (fractionally) such that edges are not overlapping. If we would choose more, edges would be overlapping and thus the right-hand side would be greater than $1$ and the solution would be infeasible. Further, each column contains exactly $5$ ones as each perfect matching admits exactly $5$ edges. Combining this we get at most $\lVert y\rVert_1 \cdot 5 \leq 3 \cdot 5 = 15$.

Let $i \le 15$ and $a_i := (B \cdot y)_i$, \ie, the right-hand side covered by the matching columns at position $i$. Thus set $x_{i} = 1 - a_i$ to satisfy the remaining right-hand side. Further,
\begin{align*}
w_{i + 1 \cdot 15} = \Delta - \Delta (1 - a_i) = \Delta (1-(1-a_i)) = \Delta \cdot a_i, 
\end{align*}
as these are the only free variable to satisfy the right-hand side, which already has the value $\Delta (1-a_i)$. In turn, this determines the next $15$ variables, \ie,  
\begin{align*}
w_{i + 2 \cdot 15} = \Delta^2 - \Delta^2 (a_i) = \Delta^2(1-a_i).
\end{align*}
Proceeding with setting the only free variables for the next $15$ rows to be satisfied we get $w_{i + 3 \cdot 15} = \Delta^3a_i$- The scheme proceeds like this. Therefore,

\begin{align*}
  \lVert w \rVert_1 = \sum_{i=1}^{15} [\sum_{j=1}^{(d+1)/2} (1 - a_i) \cdot \Delta^{2i-2} + \sum_{j=1}^{(d-1)/2} a_i \cdot \Delta^{2i-1}] \\
  = \sum_{i=1}^{15} (1-a_i) (\sum_{j=1}^{(d+1)/2} \cdot \Delta^{2i-2}) + \sum_{i=1}^{15} a_i (\sum_{j=1}^{(d-1)/2} \cdot \Delta^{2i-1}) \\
= (15-\lVert a\rVert_1) ( \sum_{j=1}^{(d+1)/2} \cdot \Delta^{2i-2}) + \lVert a_i\rVert_1 (\sum_{j=1}^{(d-1)/2} \cdot \Delta^{2i-1})\\
=(15-\lVert a\rVert_1) q + \lVert a_i\rVert_1 p \geq  (15 - \lVert y\rVert_1) \cdot  \Delta \cdot p + \lVert y\rVert_1\cdot p .
\end{align*}

Thus $\lVert x\rVert_1 = \lVert y \rVert_1 + \lVert w \rVert_1 = \lVert
y\rVert_1 + (15 - \lVert y\rVert_1) \cdot  \Delta \cdot p + \lVert
y\rVert_1\cdot p$ completing the proof.\qed

  \end{proof}

\proximity*
\begin{proof}
An optimal fractional solution $z$ is to take each matching column $1/2$ times, \ie, $z = $
\begin{align*}
 (1/2, \dots, 1/2, 0, \dots 0, \Delta, \dots, \Delta, 0, \dots 0, \Delta^2, \dots, \Delta^2, \dots, \Delta^{d-1}, \dots, \Delta^{d-1}, 0, \dots 0).
 \end{align*}
 Note that again optimality corresponds to feasibility, as we have set the objective function to zero.
 It is easy to verify that this solution is feasible as in the first $15$ columns we have two entries in $M$ with value 1. Taking both $1/2$ often and setting all variables of the identity matrix to zero gives the right-hand side $1$. Then again the values for the remaining columns are determined as explained in Theorem~\ref{t:Sensitivity}.

In turn, an optimal integral solution would either avoid all matching columns or take some of them. In the first case this would lead to take all columns of the first identity matrix and again all other values would be determined and thus the solution looks as follows:
\begin{align*}
x = (0, \dots 0, 1, \dots, 1, 0, \dots, 0, \Delta^2, \dots, \Delta^2, 0, \dots 0, \dots, \Delta^d, \dots, \Delta^d). 
\end{align*}

In the second case, at most one matching column is chosen  as two would already give a too large right-hand side in the first $15$ rows (every two matchings share one edge). Then the identity matrix will take the column corresponding to rows which have a zero entry in the chosen matching column. The remaining solution is then determined by the free variables and the right-hand side as explained in Claim~\ref{c:Norm}. 
  
Now let us look at the difference of the optimal fractional solution $z$ and the
optimal integral ones. It is easy to see that when the optimal solution takes no
matching columns, every non-zero component in $z$ is zero in $x$ and vice versa.
As the sum of both solutions is $\Delta^{\Theta(d)}$, their difference is
$\Delta^{\Theta(d)}$. 
For the other case, where a matching column is used, we get that the matching columns differ in all positions leading to difference of all other positions. Thus
\begin{align*}
\lVert x - z \rVert_1 \ge | \lVert x \rVert_1 - \lVert z \rVert_1 | \ge (1 + p + 2 q) - (15 \cdot \Delta \cdot p)\\
= |1+p+2\Delta p - 15\Delta p| = |(1-13\Delta)p+1| \geq 13 \Delta p = \Delta^{\Theta(d)}.
\end{align*}

Thus the difference between any (optimal) fractional solution $z$ and an
(optimal) integral one $x$ is $\lVert x - z\rVert_1 \geq \Delta^{\Theta(d)}$.
The ILP is polytopish due to Claim~\ref{claim:proximity:polytopish} completing the proof.. \qed
\end{proof}

\subsection{Proximity of the Bin Packing ILP}
We construct an example with a huge proximity by relying on the previous construction. Recall that we are given $n$ items with $d$ different sizes in his problem. Define these sizes as $s_i = 1/(30\Delta) + i \cdot \epsilon$ for $i = 1, \dots, d$ and some $\epsilon > 0$ with $\epsilon \leq \frac{57}{60(d-2+\Delta(d-1))}$. Obviously, the constraint matrix from the previous example is a subset
of feasible configurations, \ie, a subset of the columns of the constraint
matrix for this problem, as the value of the largest configuration is bounded by
\begin{align*}
  &s_{i}+\Delta s_{i+1}\leq s_{d-2}+\Delta s_{d-1}=1/(30\Delta)+(d-2)\epsilon + 1/30 +\Delta (d-1) \epsilon =\\
  &1/30 + 1/(30\Delta)+ \epsilon (d-2+\Delta (d-1))\underbrace{\leq}_{d \geq 2} 1/30 + 1/60 +\epsilon (d-2+\Delta(d-1)) \\
 &\underbrace{\leq}_{\epsilon \leq \frac{1}{2(d-2+\Delta(d-1))}} 1/30 + 1/60 + 57/60 = 1.
\end{align*}

Define by $C_1$ the set of these columns. Let us now define a linear objective function $c$, which has a $0$ entry for each configuration $k \in C_1$ and $1$ otherwise. Thus to minimize the objective function we can only choose these configurations. Minimize the constraint matrix accordingly. Computing a fractional optimal solution and an optimal integral one for the right-hand side of the example above will clearly lead to the same proximity. Combining it with the result of Cook \etal we thus get:

\begin{corollary}
  There is an objective function $c$ such that for the configuration ILP with
  constraint matrix $A$ and right-hand side $b$ we have
  $\prox(A,b,c)\geq \Delta^{\Theta(d)}$. 
\end{corollary}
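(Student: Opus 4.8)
The plan is to reduce the claim to the general lower bound already established in Theorem~\ref{t:Proximity}, using the objective function $c$ to force every optimal solution of the configuration ILP to be supported on the column set $C_1$ of matrix~(II). Since $C_1$ consists precisely of the columns of~(II) re-interpreted as feasible configurations for the item sizes $s_i = 1/(30\Delta)+i\epsilon$, once the support of an optimum is confined to $C_1$ the configuration ILP restricted to these columns is literally ILP~(II) with the zero objective, and the whole proximity computation of Theorem~\ref{t:Proximity} transfers verbatim.

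Concretely, I would proceed in four steps. First, I would record that every column of~(II) is a feasible configuration: the inequality chain preceding the corollary bounds the heaviest columns — those carrying one unit of some item and $\Delta$ units of another, \ie~of total size $s_a+\Delta s_b$ — by $1$, while the five-edge matching columns and the pure single-item columns are trivially light; hence $C_1$ is a well-defined subset of the configuration columns. Second, I would observe that $c$ is non-negative and vanishes exactly on $C_1$, so the optimal objective value is $0$, witnessed by the fractional solution $z$ from the proof of Theorem~\ref{t:Proximity}, which uses only $C_1$-columns. Third, I would argue that any solution (fractional or integral) placing positive weight on a column outside $C_1$ has objective value at least that weight, hence strictly positive, and therefore cannot be optimal; consequently $\fracSol(A,b,c)$ and $\intSol(A,b,c)$ coincide with the optimal fractional and integral solution sets of~(II). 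Fourth, I would invoke Theorem~\ref{t:Proximity} to conclude $\dist(\fracSol(A,b,c),\intSol(A,b,c)) = \prox(A,b,c) \geq \Delta^{\Theta(d)}$, and pair this with the matching upper bound of Proposition~\ref{p:CookProximity} to assert tightness.

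The main obstacle I expect is the third step: making rigorous that both the fractional and the integral optima are supported on $C_1$, and — crucially — that passing to the full configuration ILP does not enlarge or shrink the optimal face relative to~(II). The non-negativity of $c$ together with its vanishing on $C_1$ give the support confinement cleanly, but one must also verify that no configuration outside $C_1$ could silently extend an optimum while keeping cost $0$; this is exactly what the cost-$1$ penalty rules out, since any positive weight on such a column is detected in the objective. With both optimal sets thereby identified with those of~(II), the polytopish structure from Claim~\ref{claim:proximity:polytopish} and the norm estimate of Claim~\ref{c:Norm} apply unchanged, so the $\Delta^{\Theta(d)}$ separation between $z$ and the integral optimum carries over, completing the reduction.
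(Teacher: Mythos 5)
Your proposal is correct and follows essentially the same route as the paper: verify that the columns of ILP~(II) are feasible configurations $C_1$ via the size inequality, use the $0/1$ objective to confine every optimal (fractional or integral) solution to $C_1$, and then transfer the lower bound from Theorem~\ref{t:Proximity}. Your third step simply makes explicit the support-confinement argument that the paper states tersely as ``to minimize the objective function we can only choose these configurations.''
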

Hence, if one aims to improve the proximity of the configuration ILP,
the special objective function $\lVert x \rVert_{1}$ needs to be taken into
account. 

\paragraph*{Acknowledgments}
The authors want to thank Lars Rohwedder for enjoyable and fruitful discussions at the beginning of this project.
   
\bibliography{ref}
  
\end{document}